\DeclareMathOperator{\tr}{Tr}
\newcommand{\ran}{\rangle}
\newcommand{\lan}{\langle}
\newcommand{\id}{\mathbbm{1}}
\newcommand{\bra}[1]{\langle #1|}
\newcommand{\ket}[1]{|#1\rangle}
\newcommand{\op}[2]{|#1\rangle \langle #2|}
\newcommand{\Tr}{\textrm{Tr}}
\newtheorem{theorem}{Theorem}
\newtheorem{lemma}{\emph{Lemma}}
\newtheorem{proposition}{Proposition}
\begin{document}


\title{Creating and destroying coherence with quantum channels}
\author{Masaya Takahashi}
\email{masaya.takahashi@siu.edu}
\affiliation{Department of Physics and Astronomy, Southern Illinois University, Carbondale, Illinois 62901, USA}
\author{Swapan Rana}
\affiliation{Centre for Quantum Optical Technologies, Centre of New Technologies,
University of Warsaw, Banacha 2c, 02-097 Warsaw, Poland}
\affiliation{Physics and Applied Mathematics Unit, Indian Statistical Institute, 203 B T Road, Kolkata 700108, India}
\author{Alexander Streltsov}
\affiliation{Centre for Quantum Optical Technologies, Centre of New Technologies,
University of Warsaw, Banacha 2c, 02-097 Warsaw, Poland}
\date{\today}

\begin{abstract} The emerging quantum technologies rely on our ability to establish and control quantum systems in nonclassical states, exhibiting entanglement and quantum coherence. It is thus crucial to understand how entanglement and coherence can be created in the most efficient way. In this work we study optimal ways to create a large amount of quantum coherence via quantum channels. For this, we compare different scenarios, where the channel is acting on an incoherent state, on states which have coherence, and also on subsystems of multipartite quantum states. We show that correlations in multipartite systems do not enhance the ability of a quantum channel to create coherence. We also study the ability of quantum channels to destroy coherence, proving that a channel can destroy more coherence when acting on a subsystem of a bipartite state. Crucially, we also show that the destroyed coherence on multipartite system can exceed the upper bound of those on the single system when the total state is entangled. Our results significantly simplify the evaluation of coherence generating capacity of quantum channels, which we also discuss.
\end{abstract}

\maketitle

\section{Introduction \label{Sec:Introduction}} 
The superposition principle of quantum mechanics is one of the main reasons for the discrepancy between classical and quantum physics. It leads to nonclassical phenomena, such as quantum coherence and entanglement, which can be used for quantum technological applications~\cite{Horodecki-2009a,Streltsov-2017a,Wu2105.06854}. An important example is quantum state merging~\cite{Horodecki-2005a}, where entanglement between remote parties can be used to merge their parts of a shared quantum state. Having access to local coherence allows to reduce the entanglement consumption in this task~\cite{Streltsov-2016a}.

In this work we investigate optimal ways for creating quantum coherence. For this we will apply tools from the resource theory of quantum coherence \cite{Aberg-2006a,Baumgratz-2014a,Winter-2016a,Streltsov-2017a}. In this theory, quantum states which are diagonal with respect to a reference basis $\{\ket{i}\}$ are called incoherent. Correspondingly, a quantum operation is called incoherent if it can be decomposed into Kraus operators $\{K_j\}$ which do not create coherence i.e., $K_j\ket{m}\propto \ket{n}$~\cite{Baumgratz-2014a}. Within the resource theory of coherence, the basic unit is the maximally coherent qubit state $\ket{+}=(\ket{0}+\ket{1})/\sqrt{2}$. By using this state -- or many copies thereof -- along with incoherent operations, it is possible to prepare an arbitrary quantum state $\rho$, and to implement an arbitrary transformation of a quantum system~\cite{Baumgratz-2014a,Chitambar-2016a,Ben-Dana-2017a,BenDanaPhysRevA.96.059903}. If instead one has access to a noisy state $\rho$, an incoherent distillation procedure can be applied to extract the state $\ket{+}$. The maximal rate of $\ket{+}$ states in the limit of many copies of $\rho$ is given by the relative entropy of coherence \cite{Winter-2016a}:
\begin{equation}
    C_r(\rho) = \min_{\sigma \in \mathcal{I}} S(\rho||\sigma) = S(\Delta\left[\rho\right]) - S(\rho). \label{eq:Cr}
\end{equation}
Here, $\mathcal I$ is the set of incoherent states, i.e., states which are diagonal in the reference basis $\{\ket{i}\}$. Moreover,
\begin{equation}
    S(\rho||\sigma) = \tr[\rho \log_2 \rho]- \tr[\rho \log_2 \sigma]
\end{equation}
is the quantum relative entropy, $S(\rho) = -\tr[\rho \log_2 \rho]$ is the von Neumann entropy, and $\Delta[\rho]=\sum \lan i|\rho|i\ran\,|i\ran\lan i|$ denotes complete dephasing in the reference basis $\{|i\ran\}$.

One way to create coherence is to apply a quantum channel $\Phi$ onto an incoherent state $\sigma$. The maximal amount of coherence achievable in this way is called \emph{cohering power} of $\Phi$ \cite{Baumgratz-2014a,Mani-2015a}: 
\begin{equation}
    \mathcal C(\Phi) = \sup_{\sigma \in \mathcal I} C\left(\Phi[\sigma]\right),
\end{equation}
where $C$ is a suitable coherence quantifier. Instead of applying the quantum channel $\Phi$ onto an incoherent state, it can be advantageous to have initial coherence to start with. We are then interested in the maximal increase of coherence, maximized over all quantum states:
\begin{equation}
    \mathscr{C}(\Phi) = \sup_\rho \left\{C\left(\Phi\left[\rho\right]\right) - C\left(\rho\right)\right\}.
\end{equation}
This quantity is known as the \emph{generalized cohering power}~\cite{Yao-2015a,Bu-2017b,Garcia-Diaz-2016a} of $\Phi$. Clearly generalized cohering power is never smaller than cohering power, which directly follows from their definitions. Also, if coherence is quantified via the relative entropy of coherence, there exist a channel $\Phi$ such that $\mathcal C(\Phi) < \mathscr{C}(\Phi)$~\cite{Bu-2017a}. The same is true for another quantifier of coherence based on the $\ell_1$-norm~\cite{Bu-2017a}, which will be discussed in more detail in Section~\ref{sec:MatrixNorms}.

In the most general case (see also Fig.~\ref{fig:ccp}), one can apply the quantum channel $\Phi$ onto one part of a bipartite quantum state $\rho^{AB}$, leading to the \emph{complete cohering power}~\cite{Ben-Dana-2017a}
\begin{equation}
\mathfrak{C}\left(\Phi\right) = \sup_k \left( \sup_{\rho^{AB}} \left\{ C\left(\Phi \otimes \mathbbm{1}_k \left[\rho^{AB}\right]\right) - C\left(\rho^{AB}\right) \right\} \right),
\end{equation}
where $k$ is Bob's local dimension. For a quantum channel $\Phi$, the action of the channel on an extended space is described by $\Phi \otimes \id_k$. Thus, complete cohering power of $\Phi$ is equal to the generalized cohering power of $\Phi \otimes \mathbbm{1}_k$ in the limit $k \rightarrow \infty$:
\begin{equation}
    \mathfrak{C}\left(\Phi\right) = \lim_{k \rightarrow \infty} \mathscr{C}(\Phi \otimes \mathbbm{1}_k).
\end{equation}
Hereafter, we will use $\mathbbm{1}$ for identity operation when dimension can be arbitrary and there is no confusion about it.

\begin{figure}
	\centering
	\includegraphics[width=0.9\columnwidth]{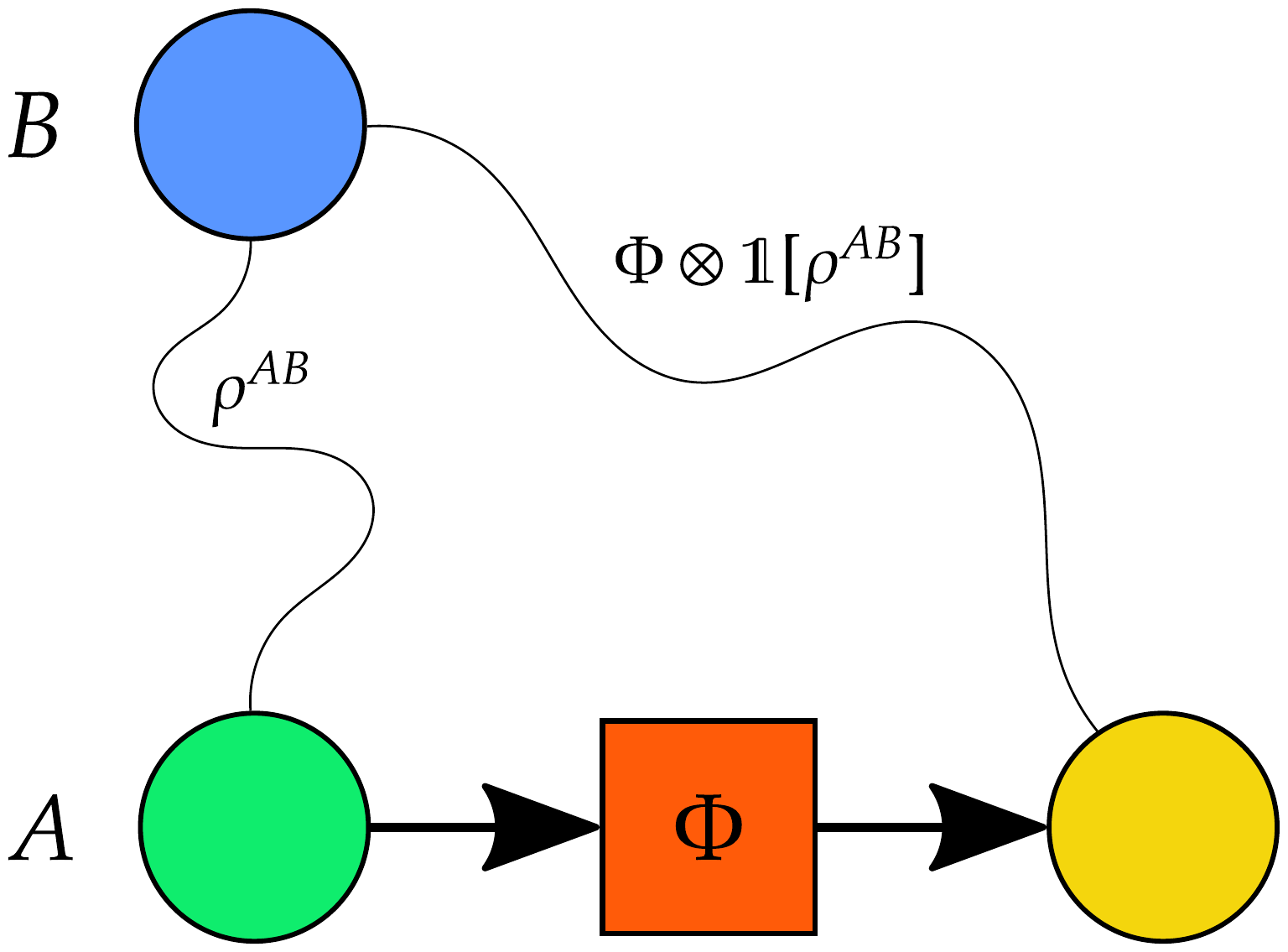}
	\caption{A quantum channel $\Phi$ can be used to create quantum coherence. In the most general setup, $\Phi$ can act on a part of a bipartite quantum state $\rho^{AB}$. The coherence created in this way is given by $C(\Phi\otimes \id[\rho^{AB}]) - C(\rho^{AB})$. As we show in this work, pre-established correlations are not useful for coherence generation. On the other hand, entanglement can enhance the ability of a channel to destroy quantum coherence.}
	\label{fig:ccp}
\end{figure}

While cohering power quantifies the ability of a quantum channel to create coherence, the \emph{decohering power} quantifies the maximal amount of coherence that the channel can destroy \cite{Mani-2015a}:
\begin{equation}
    \mathcal D(\Phi) = \sup_{\Phi[\rho]\, \in\, \mathcal I} C(\rho).
\end{equation}
In a similar way as for the cohering power, \emph{generalized} and \emph{complete decohering powers} are defined as \cite{Yao-2015a,Bu-2017a,Garcia-Diaz-2016a}
\begin{align}
    \mathscr{D}(\Phi) &= \sup_\rho \left\{ C(\rho) - C(\Phi[\rho]) \right\}, \\
    \mathfrak{D}(\Phi) &= \sup_{k} \left( \sup_{\rho^{AB}} \left\{ C\left(\rho^{AB}\right) - C\left(\Phi\otimes \id_k\left[\rho^{AB}\right]\right) \right\} \right).
\end{align}

So far we have defined cohering and decohering powers in a general fashion, without specifying the underlying coherence quantifier $C$. In the following, we will focus on the relative entropy of coherence introduced in Eq.~(\ref{eq:Cr}). Another coherence quantifier will be discussed in Section~\ref{sec:MatrixNorms}.

In general, pre-established correlations are expected to be useful for creating (or destroying) resources in physical processes. So it is interesting and important to explore the role of entanglement --- possibly the strongest from of correlations, in generating or eliminating coherence. Note that the relation between entanglement and coherence is well established at \emph{static} or state level~\cite{Streltsov-2015a,Streltsov-2016a,Chitambar-2016a,Takahashi-2018}, while our aim here is to explore the connection at a \emph{dynamic} level of operations.

\section{Properties of complete cohering power}

For a quantum system of dimension $d$, any maximally coherent state has the form $\ket{+_d}=\frac{1}{\sqrt{d}}\sum_{j}e^{i\theta_j}\ket{j}$. 
We will denote maximally coherent states also by $\ket{+}$, when dimension is arbitrary or obvious from the context. Since any state $\rho$ can be obtained from a maximally coherent state of the same dimension via incoherent operations~\cite{Baumgratz-2014a}, generalized cohering power is upper bounded by the coherence of the maximally coherent state regardless of coherence quantifier:
\begin{align*}
 \mathscr{C}(\Phi) \leq C \left(\op{+}{+}\right).
\end{align*}
Equality can be achieved with the channel $\Phi^{\max}$ having Kraus operators $K_i = \op{+}{i}$. 

We will now investigate the properties of complete cohering power. Since the dimension of ancillary system is in general not bounded, it is important to know if complete cohering power is a finite quantity. While the actual value of complete cohering power might differ for different channels, we are interested in general bounds, which only depend on the dimension of the system.
We start with the following lemma.
\begin{lemma} \label{lem:QuantumIncoherent}
	For any quantum-incoherent state $\rho^{AB} = \sum_i p_i \rho^A_i \otimes \op{i}{i}$, there is an index $n$ such that
	\begin{equation} C_r\left( \Phi \otimes \mathbbm{1} \left[ \rho^{AB} \right] \right) - C_r\left( \rho^{AB}\right) \leq  C_r\left(\Phi \left[\rho^A_n \right] \right) - C_r\left(\rho^A_n\right).\end{equation}
\end{lemma}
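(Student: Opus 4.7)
The plan is to exploit the special classical--quantum structure of the state $\rho^{AB}$ so that the relative entropy of coherence decomposes additively, and then the inequality reduces to the elementary fact that the maximum of a set of real numbers is at least as large as any convex combination of them.

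First I would compute $C_r(\rho^{AB})$ directly. Since the marker states $\{\ket{i}\}$ on $B$ are orthogonal, the standard cq-state formula gives $S(\rho^{AB}) = H(\{p_i\}) + \sum_i p_i S(\rho_i^A)$. On the other hand, dephasing in the product reference basis $\{\ket{jk}\}$ acts only on the $A$-factors of each branch, so $\Delta[\rho^{AB}] = \sum_i p_i \Delta[\rho_i^A]\otimes\op{i}{i}$, which is again a cq-state and therefore obeys $S(\Delta[\rho^{AB}]) = H(\{p_i\}) + \sum_i p_i S(\Delta[\rho_i^A])$. The $H(\{p_i\})$ contributions cancel in the difference, leaving the additivity identity
\begin{equation}
C_r(\rho^{AB}) = \sum_i p_i\, C_r(\rho_i^A).
\end{equation}

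Next I would apply exactly the same reasoning to $\Phi\otimes\id[\rho^{AB}] = \sum_i p_i\,\Phi[\rho_i^A]\otimes\op{i}{i}$, which is still quantum-incoherent with the same probabilities $p_i$ and with branches $\Phi[\rho_i^A]$ on $A$. Hence
\begin{equation}
C_r\bigl(\Phi\otimes\id[\rho^{AB}]\bigr) = \sum_i p_i\, C_r\bigl(\Phi[\rho_i^A]\bigr).
\end{equation}
Subtracting, the coherence change on the bipartite level equals the weighted average of per-branch coherence changes,
\begin{equation}
C_r\bigl(\Phi\otimes\id[\rho^{AB}]\bigr) - C_r(\rho^{AB}) = \sum_i p_i\bigl[C_r(\Phi[\rho_i^A]) - C_r(\rho_i^A)\bigr].
\end{equation}
Since any convex combination is upper bounded by its largest summand, choosing $n$ to be an index maximising $C_r(\Phi[\rho_i^A]) - C_r(\rho_i^A)$ yields the claim.

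There is no real obstacle here; the entire content of the lemma is the additivity of $C_r$ on quantum-incoherent states, which in turn follows from the fact that both $S$ and $\Delta$ treat a cq-state with orthogonal flags on $B$ as a disjoint union of branches. The only care needed is to check that dephasing in the full reference basis really does act as $\Delta_A$ on each branch, which is immediate because $\op{i}{i}$ is already diagonal on $B$.
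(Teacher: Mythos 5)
Your proposal is correct and follows essentially the same route as the paper's proof: establish the additivity identity $C_r\bigl(\sum_i p_i \rho_i^A \otimes \op{i}{i}\bigr) = \sum_i p_i C_r(\rho_i^A)$, apply it to both the input and the (still quantum-incoherent) output state, and bound the resulting convex combination by its largest term. Your write-up is in fact more explicit than the paper's, which dispatches the additivity step with ``by simple calculation.''
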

\begin{proof}
		Notice $C_r(\sum_i p_i \rho^A_i \otimes \op{i}{i}) = \sum_i p_i C_r(\rho^A_i)$ by simple calculation. Then we have
		\begin{align}
		 C_r\left( \Phi \otimes \mathbbm{1} \left[ \rho^{AB} \right] \right) - C_r\left( \rho^{AB}\right) 
		 =& C_r\left(\Phi \otimes \mathbbm{1}\left[\sum_i p_i \rho^A_i \otimes \op{i}{i}\right]\right) \nonumber \\
		 -& C_r\left(\sum_j p_j\rho^A_j \otimes \op{j}{j}\right) \nonumber \\
		 = & \sum_i p_i \left[C_r\left(\Phi\left(\rho^A_i\right)\right) - C_r\left(\rho^A_i\right)\right]. \label{eq:lemma1}
	    \end{align}
		 Since $0 \leq p_i \leq 1$ for all $i$, there $n$ satisfying
		 \begin{align}
		 \sum_i p_i \left[C_r\left(\Phi\left[\rho^A_i\right]\right) - C_r\left(\rho^A_i\right)\right] \leq C_r\left(\Phi\left[\rho^A_n\right]\right) - C_r\left(\rho^A_n\right).
		\end{align}
		Substituting this into Eq.~(\ref{eq:lemma1}), we finished the proof.
\end{proof}

Equipped with this Lemma, we are ready to prove the following theorem.
\begin{theorem}\label{thm:ccprel}
	Complete cohering power and generalized cohering power coincide:
	\begin{equation}
	    \mathscr{C}_r(\Phi) = \mathfrak{C}_r(\Phi).
	\end{equation}
\end{theorem}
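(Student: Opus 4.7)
The inequality $\mathfrak{C}_r(\Phi)\geq\mathscr{C}_r(\Phi)$ is immediate by setting the ancilla dimension $k=1$ in the definition of $\mathfrak{C}_r$, so the real task is the reverse direction. My plan is to reduce an arbitrary bipartite input to the quantum-incoherent case already handled by Lemma~\ref{lem:QuantumIncoherent}. Given $\rho^{AB}$ with Bob's dimension $k$, I would introduce $\sigma^{AB}:=(\id\otimes\Delta^B)[\rho^{AB}]$, which is quantum-incoherent by construction, and write $\tau^{AB}:=(\Phi\otimes\id)[\rho^{AB}]$, $\mu^{AB}:=(\Phi\otimes\id)[\sigma^{AB}]$. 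Because $\Phi\otimes\id$ commutes with $\id\otimes\Delta^B$, we also have $\mu^{AB}=(\id\otimes\Delta^B)[\tau^{AB}]$.

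The central identity I would establish is that for any bipartite $\eta^{AB}$,
\begin{equation}
 C_r(\eta^{AB}) - C_r\!\left((\id\otimes\Delta^B)[\eta^{AB}]\right) = S\!\left(\eta^{AB}\,\big\|\,(\id\otimes\Delta^B)[\eta^{AB}]\right).
\end{equation}
This follows in two short steps: since the composite reference basis is the product basis, $\Delta^{AB}=\Delta^A\otimes\Delta^B$ and hence $\Delta^{AB}[\eta^{AB}]=\Delta^{AB}[(\id\otimes\Delta^B)[\eta^{AB}]]$, so the left-hand side collapses to the entropy increase $S((\id\otimes\Delta^B)[\eta^{AB}])-S(\eta^{AB})$; exploiting that $(\id\otimes\Delta^B)[\eta^{AB}]$ is block-diagonal in the $B$-basis, its logarithm has the same block structure and a short trace computation identifies this entropy increase with the claimed relative entropy. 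Applying the identity to both $\rho^{AB}$ and $\tau^{AB}$ and subtracting yields
\begin{equation}
 C_r(\tau^{AB})-C_r(\rho^{AB}) = \bigl[C_r(\mu^{AB})-C_r(\sigma^{AB})\bigr] + \bigl[S(\tau^{AB}\|\mu^{AB}) - S(\rho^{AB}\|\sigma^{AB})\bigr],
\end{equation}
and the second bracket is non-positive by the data processing inequality applied to $\Phi\otimes\id$, since $\tau^{AB},\mu^{AB}$ are the images of $\rho^{AB},\sigma^{AB}$ under that channel.

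The rest is mechanical: applying Lemma~\ref{lem:QuantumIncoherent} to the quantum-incoherent state $\sigma^{AB}$ produces an index $n$ with $C_r(\mu^{AB})-C_r(\sigma^{AB})\leq C_r(\Phi[\rho_n^A])-C_r(\rho_n^A)\leq \mathscr{C}_r(\Phi)$, and taking the supremum over $\rho^{AB}$ and $k$ delivers $\mathfrak{C}_r(\Phi)\leq\mathscr{C}_r(\Phi)$. The only non-routine point is the relative-entropy identity above; once the block-diagonal structure in the $B$-basis is exploited, it reduces to a quick trace calculation, and everything else is bookkeeping plus a single invocation of the DPI.
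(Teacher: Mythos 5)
Your proposal is correct and follows essentially the same route as the paper's proof: both split $C_r$ via the $B$-dephasing (your identity $C_r(\eta)-C_r(\Delta^B[\eta])=S(\eta\|\Delta^B[\eta])$ is exactly the paper's chain-rule decomposition $S(\eta\|\Delta^{AB}[\eta])=S(\eta\|\Delta^B[\eta])+S(\Delta^B[\eta]\|\Delta^{AB}[\eta])$), discard the non-positive part by data processing under $\Phi\otimes\id$, and close with Lemma~\ref{lem:QuantumIncoherent} applied to the quantum-incoherent state $\Delta^B[\rho^{AB}]$.
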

\begin{proof}
		Our goal is to show 
		\begin{align}
		&\underset{\rho^{AB}}{\max} \left\{ C_r\left(\Phi \otimes \mathbbm{1}_k\left[\rho^{AB}\right]\right) - C_r\left(\rho^{AB}\right) \right\} \\
		&\leq  \underset{\rho^{A}}{\max} \left\{C_r\left(\Phi \left[\rho^{A}\right]\right) - C_r\left(\rho^A\right) \right\} \nonumber
		\end{align}
		for any $k$ and any quantum channel $\Phi$. The left hand side can be expanded as
		\begin{align}
		\underset{\rho^{AB}}{\max}& \left \{ C_r\left(\Phi \otimes \mathbbm{1}_k\left[\rho^{AB}\right]\right) - C_r\left(\rho^{AB}\right) \right\}\nonumber \\ 
		= \underset{\rho^{AB}}{\max}& \left\{S\left(\Phi \otimes \mathbbm{1}_k\left[\rho^{AB}\right] || \Delta\left[\Phi \otimes \mathbbm{1}_k\left[\rho^{AB}\right]\right]\right) - S\left(\rho^{AB} || \Delta \left[\rho^{AB}\right]\right) \right\} \nonumber \\
		= \underset{\rho^{AB}}{\max}& \bigg \{ S\left( \Phi \otimes \mathbbm{1}_k\left[\rho^{AB}\right] || \Delta^{B}\left[\Phi \otimes \mathbbm{1}_k\left[\rho^{AB}\right]\right]\right) \nonumber \\ 
		+&S\left(\Delta^B \left[\Phi \otimes \mathbbm{1}_k\left[\rho^{AB}\right]\right] || \Delta^{AB}\left[\Phi \otimes \mathbbm{1}_k\left[\rho^{AB}\right]\right]\right) \nonumber \\
		-&S\left(\rho^{AB} || \Delta^B\left[\rho^{AB}\right]\right) \nonumber \\
		-&S\left(\Delta^B\left[\rho^{AB}\right] || \Delta^{AB}\left[\rho^{AB}\right]\right) \bigg \} \label{eq:theorem1}
		\end{align}
		where $\Delta^B$ denotes dephasing in the incoherent basis of the system $B$ only.
		By monotonicity of the relative entropy under completely positive and trace-preserving maps we have
		\begin{equation}
		    S\left(\Phi\otimes\mathbbm{1}_{k}\left[\rho^{AB}\right]||\Delta^{B}\left[\Phi\otimes\mathbbm{1}_{k}\left[\rho^{AB}\right]\right]\right)-S\left(\rho^{AB}||\Delta^{B}\left[\rho^{AB}\right]\right)\leq 0.
		\end{equation}
		 From Eq.~(\ref{eq:theorem1}) then we get the following inequality:
		\begin{align*}
		\mathfrak{C}_r(\Phi) \leq \underset{\rho^{AB}}{\max} & \bigg\{ S\left(\Delta^B \left[\Phi \otimes \mathbbm{1}\left[\rho^{AB}\right]\right] || \Delta^{AB}\left[\Phi \otimes \mathbbm{1}\left[\rho^{AB}\right]\right]\right)  \\
		-&S\left(\Delta^B\left[\rho^{AB}\right] || \Delta^{AB}\left[\rho^{AB}\right]\right) \bigg \} \\
		= \underset{\rho^{AB}}{\max}& \left\{ C_r\left(\Delta^B \left[\Phi \otimes \mathbbm{1}\left[\rho^{AB}\right]\right] \right) - C_r\left(\Delta^B \left[\rho^{AB}\right]\right) \right \}.
		\end{align*}
	    The proof of the theorem is completed by using Lemma~\ref{lem:QuantumIncoherent}, and noting that $\Delta^B[\rho^{AB}]$ is a quantum-incoherent state.
\end{proof}

The above theorem shows that correlations with an ancillary system do not enhance the ability of a quantum channel to create quantum coherence. Since $C_r(\rho^A \otimes \sigma^B) =C_r(\rho^A) + C_r(\sigma^B)$, any state $\rho^A\otimes\sigma^B$ with any $\sigma^B$ is optimal for complete cohering power as long as $\rho^A$ is optimal for generalized cohering power. However, the other direction is not true in general: if $\rho^{AB}$ is an optimal input state for complete cohering power, $\Tr_B[\rho^{AB}]$ does not need to be optimal for the generalized cohering power. This can be verified with the input state $\ket{\phi^+} = (\ket{00} + \ket{11})/\sqrt{2}$ and $\Phi(\rho^A) = H\rho^AH^\dagger $ where $H$ is the Hadamard gate.

We now discuss applications of our results to the coherence generating capacity~\cite{Ben-Dana-2017a}. This quantity is defined in an operational way, assuming that a quantum channel $\Phi$ is applied $n$ times onto an incoherent state, with the goal to generate maximally coherent states~\cite{Ben-Dana-2017a}. In more detail, consider a quantum channel $\Phi:A \rightarrow A'$ and a sequence of bipartite incoherent operations $\mathcal{I}_i: A' \otimes B_{i-1} \rightarrow A \otimes B_{i}$. 
An initial incoherent state $\rho_0 \in A \otimes B_0$ is transformed by $\Phi \otimes \mathbbm{1}$ into $\rho'_0 = \Phi \otimes \mathbbm{1}[\rho_0]$. Then the incoherent operation $\mathcal{I}_1$ is applied on $\rho'_0$, leading to the state $\rho_1 =  \mathcal{I}_1[\rho'_0] = \mathcal{I}_1[\Phi \otimes \mathbbm{1}[\rho_0]]$. Iterating the procedure $n$ times we obtain a bipartite quantum state $\rho_n \in A \otimes B_n$. 
The coherence generating capacity of the channel $\Phi$ is now defined as~\cite{Ben-Dana-2017a}
\begin{equation}
    \mathbf{C}_{\textrm{gen}}(\Phi)=\sup\left\{ R:\lim_{n\rightarrow\infty}\left(\inf_{\left\{ \mathcal{I}_{i}\right\} }\left\Vert \rho_{n}-\ket{+_{2}}\!\bra{+_{2}}^{\otimes nR}\right\Vert _{1}\right)=0\right\},
\end{equation}
where $||M||_1 = \tr \sqrt{M^\dagger M}$ is the trace norm.

As was proven in~\cite{Ben-Dana-2017a}, complete cohering power of $\Phi$ is an upper bound on its coherence generating capacity. By using Theorem~\ref{thm:ccprel} the upper bound simplifies to
\begin{equation}
    \mathbf{C}_\mathrm{gen} (\Phi) \leq \max_\rho \left\{C_r\left(\Phi\left[\rho\right]\right) - C_r\left(\rho\right)\right\}.
\end{equation}
If we use maximally incoherent operations (i.e., all quantum operations which do not create coherence~\cite{Aberg-2006a}) between individual applications of the channel $\Phi$, the corresponding coherence generating capacity $\mathbf{C}_\mathrm{gen}^\mathrm{MIO}$ was shown to be equal to the complete cohering power~\cite{Garcia-Diaz-2018a}. With Theorem~\ref{thm:ccprel} we directly obtain
\begin{equation}
    \mathbf{C}_\mathrm{gen}^\mathrm{MIO} (\Phi) = \max_\rho \left\{C\left(\Phi\left[\rho\right]\right) - C\left(\rho\right)\right\}.
\end{equation}
Thus, our results significantly reduce the complexity to evaluate the coherence generating capacity of quantum channels.

\section{Properties of complete decohering power: advantage of entanglement}

We now explore the role of correlations in the process of destroying coherence for a given quantum channel $\Phi$. While correlations are not useful for creating coherence (see Theorem~\ref{thm:ccprel}), we will show that correlations between the system and an ancilla can enhance the ability of the channel to destroy quantum coherence. As an example, consider the erasing channel $\Lambda[\rho] = \op{0}{0}$. Let now $\Lambda$ act on one qubit of the two-qubit state 
\begin{equation}
    \ket{\psi^{AB}} = \frac{1}{\sqrt{2}}\left(\ket{0}\ket{+} + \ket{1}\ket{-}\right).
\end{equation}
Observe that $C_r(\op{\psi^{AB}}{\psi^{AB}}) = 2$ because the state is maximally coherent. The state after applying $\Lambda \otimes \id$ to $\ket{\psi}^{AB}$ is the incoherent state $\op{0}{0}\otimes \mathbbm{1}/2$. This shows that the complete decohering power is $\mathfrak{D}_r(\Lambda) = 2$. On the other hand, the generalized decohering power is upper bounded as $\mathscr{D}(\Phi) \leq C\left(\op{+}{+}\right)$, regardless of the coherence quantifier, with equality achieved on the erasing channel $\Lambda$. We thus obtain $\mathscr{D}_r(\Lambda) = 1$ which is strictly smaller than $\mathfrak{D}_r(\Lambda)$, as claimed.

The above results show that complete decohering power can  in general exceed the generalized decohering power. In the next step we will provide an upper bound on the complete decohering power, which depends only on the dimension of the corresponding Hilbert space.
\begin{lemma} \label{lem:DecoheringPowerBound}
	Complete decohering power of a quantum channel $\Phi$ on the Hilbert space of dimension $d$ is bounded above as
	\begin{equation}
	\mathfrak{D}_r(\Phi) \leq 2\log_2 d.
	\end{equation}
    If we restrict the initial states to be separable, the upper bound will be the same with those of generalized decohering power:
    \begin{equation}
    \sup_k \max_{\rho_{\mathrm{sep}}^{AB}}\left\{ C_{r}\left(\rho_{\mathrm{sep}}^{AB}\right)-C_{r}\left(\Phi\otimes\id_{k}\left[\rho_{\mathrm{sep}}^{AB}\right]\right)\right\} \leq \log_2 d.
    \end{equation}
\end{lemma}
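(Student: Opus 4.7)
My plan is to bound $C_r(\rho^{AB}) - C_r(\Phi\otimes\id_k[\rho^{AB}])$ uniformly in the ancilla dimension $k$, by upper-bounding the first term and lower-bounding the second so that the ancilla-dependent contributions cancel.

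First I would upper-bound the input coherence by writing $C_r(\rho^{AB}) = S(\Delta^{AB}[\rho^{AB}]) - S(\rho^{AB})$ and squeezing each entropy separately. The marginals of $\Delta^{AB}[\rho^{AB}]$ are $\Delta^A[\rho^A]$ on $A$ and $\Delta^B[\rho^B]$ on $B$, so subadditivity gives $S(\Delta^{AB}[\rho^{AB}]) \leq S(\Delta^A[\rho^A]) + S(\Delta^B[\rho^B]) \leq \log_2 d + S(\Delta^B[\rho^B])$. Meanwhile the Araki--Lieb inequality combined with $S(\rho^A) \leq \log_2 d$ yields $S(\rho^{AB}) \geq S(\rho^B) - S(\rho^A) \geq S(\rho^B) - \log_2 d$. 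Combining,
\begin{equation*}
C_r(\rho^{AB}) \leq 2\log_2 d + C_r(\rho^B).
\end{equation*}

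Next I would lower-bound the output coherence by data processing of the relative entropy under the partial trace over $A$. Since $\tr_A[\Delta^{AB}[\sigma^{AB}]] = \Delta^B[\sigma^B]$ for any state, and since the $B$-marginal of $\Phi\otimes\id_k[\rho^{AB}]$ is still $\rho^B$, applying $\tr_A$ to $C_r(\Phi\otimes\id_k[\rho^{AB}]) = S(\Phi\otimes\id_k[\rho^{AB}] \,\|\, \Delta^{AB}[\Phi\otimes\id_k[\rho^{AB}]])$ gives
\begin{equation*}
C_r(\Phi\otimes\id_k[\rho^{AB}]) \geq C_r(\rho^B).
\end{equation*}
Subtracting, the two $C_r(\rho^B)$ terms cancel and $\mathfrak{D}_r(\Phi) \leq 2\log_2 d$, uniformly in $k$.

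For the separable case, I would strengthen only the Araki--Lieb step, using the standard fact that separable states have nonnegative conditional entropy, $S(\rho^{AB}_{\mathrm{sep}}) \geq S(\rho^B)$. This removes one of the two $\log_2 d$ losses, upgrading the input bound to $C_r(\rho^{AB}_{\mathrm{sep}}) \leq \log_2 d + C_r(\rho^B)$, and together with the same output lower bound yields the refined bound $\log_2 d$.

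The only point requiring care is arranging everything so that the bound is independent of $k$: subadditivity, Araki--Lieb, and data processing of relative entropy must all be applied so that the only surviving ancilla-dependent quantity on each side is $C_r(\rho^B)$, which then cancels exactly. Once that alignment is set up, no further structure of $\Phi$ or $\rho^{AB}$ is needed.
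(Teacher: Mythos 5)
Your proposal is correct and uses exactly the same ingredients as the paper's proof: monotonicity of $C_r$ under partial trace to lower-bound the output coherence by $C_r(\rho^B)$, subadditivity of the entropy of the dephased state, the Araki--Lieb inequality for the general bound, and nonnegativity of conditional entropy for separable states for the refined bound. The only difference is cosmetic ordering (you bound the input coherence first and then cancel $C_r(\rho^B)$, whereas the paper subtracts the output bound at the start), so this matches the paper's argument.
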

\begin{proof}
		For a given quantum channel $\Phi$ we define the bipartite channel $\Phi' = \Phi^A \otimes \mathbbm{1}^B$, and $d=d_A$ is the dimension of (the Hilbert space of) system $A$. Observe that $C_r\left(\Phi'\left[\rho^{AB}\right]\right) \geq C_r(\rho^B)$ from the fact that $C_r$ does not increase under partial trace. We obtain the following inequality:
		\begin{align}
		\mathfrak{D}_r(\Phi) \leq& \sup_{k}  \max_{\rho^{AB}} \left\{ C_r(\rho^{AB}) - C_r(\rho^{B}) \right\} \nonumber \\
		= &\sup_{k}  \max_{\rho^{AB}} \left\{ S\left(\Delta\left[\rho^{AB}\right]\right) - S\left(\rho^{AB}\right) - S\left(\Delta\left[\rho^B\right]\right) + S\left(\rho^B\right) \right\} \nonumber \\
        \leq & \sup_{k}  \max_{\rho^{AB}} \left\{S\left(\Delta\left[\rho^A\right]\right) - S\left(\rho^{AB}\right)  + S\left(\rho^B\right) \right\}\nonumber \\
        \leq & \max_{\rho^{A}} \left\{S\left(\Delta\left[\rho^A\right]\right) + S\left(\rho^A\right)\right\} \leq 2\log_2 d. \label{eq:DecoheringPowerBound}
		\end{align}
    	Here, we used the inequalities $S(\Delta[\rho^{AB}]) - S(\Delta[\rho^B]) \leq S(\Delta[\rho^A])$ and $-S(\rho^{AB}) + S(\rho^B) \leq S(\rho^A)$. 
    	When the state $\rho^{AB}$ is separable, it is also true that $-S(\rho^{AB}) + S(\rho^B) < 0 $. We apply this to the second line of above equation and obtain the upper bound for separable input states as follows. 
    	\begin{align}
    	\mathfrak{D}_r(\Phi) \leq & \sup_{k}  \max_{\rho_{\mathrm{sep}}^{AB}} \left\{S\left(\Delta\left[\rho_{\mathrm{sep}}^{AB}\right]\right) - S\left(\Delta\left[\rho^{B}\right]\right)\right\} \nonumber \\
        \leq & \max_{\rho^{A}} S\left(\Delta\left[\rho^A\right]\right) \leq \log_2 d. \label{eq:DecoheringPowerBoundSeparable}
    	\end{align} 	
\end{proof}
Note that the upper bounds in Lemma~\ref{lem:DecoheringPowerBound} are achieved for the erasing channel $\Lambda[\rho] = \ket{0}\!\bra{0}$ in either cases, entangled and separable inputs. This means that these upper bounds are the optimal bounds which depend on the dimension of the quantum channel only. The above proof also reveals interesting properties of quantum states achieving the maximal value of complete decohering power. In order to achieve the maximum value $2 \log_2 d$ for entangled inputs, the local state $\rho^A$ in Eq.~(\ref{eq:DecoheringPowerBound}) must satisfy $S(\Delta[\rho^A]) = S(\rho^A) = \log_2 d_A$, which means that $\rho^A$ must be maximally mixed. If, however, $\rho^A$ is maximally coherent, i.e., $S(\Delta[\rho^A]) = \log_2 d_A$ and $S(\rho^A) = 0$, then the eliminated coherence cannot be greater than $\log_2 d_A$ which is the maximum value of generalized decohering power.

The arguments just presented suggest that a quantum channel can eliminate more coherence when acting on one part of an entangled bipartite state. However, there are entangled input states where the amount of coherence eliminated through given $\Phi$ does not exceed $\mathscr{D}_r(\Phi)$. Also even if two different initial states have the same entanglement and coherence, their coherence after the application of the channel can be different. For example, consider the states $\ket{\psi}=\sin\theta\ket{0+}+\cos\theta\ket{1-}$, $\ket{\phi} = \sin\theta\ket{+0}+\cos\theta\ket{-1}$. Both states have the same entanglement and coherence but their coherence after the erasing channel $\Lambda[\rho]=\op{0}{0}$ are different.

The results presented above lead to an interesting question: if we use separable states in the definition of complete decohering power, does it coincide with the generalized decohering power? A rigorous proof of this statement would show that entanglement can enhance the ability of a quantum channel to destroy coherence, and provide another quantitative connection between the resource theories of entanglement and coherence. We leave this questions open for future research. 

\section{Coherence measures based on \texorpdfstring{$\ell_1$}{}-norm \label{sec:MatrixNorms}}  

We will now discuss cohering and decohering powers for the $\ell_1$-norm of coherence defined as~\cite{Baumgratz-2014a}
\begin{align}
C_{\ell_1}(\rho) = \min_{\sigma \in \mathcal{I}} ||\rho - \sigma||_{\ell_1}=  \sum_{i \neq j}|\bra{i}\rho \ket{j}|
\end{align}
with the $\ell_1$-norm $||M||_{\ell_1} = \sum_{i,j}|M_{ij}|$.
Maximum coherence is given by $C_{\ell_1}(\op{+_d}{+_d}) = d-1$. From this result, we immediately see that the generalized cohering power is at most $d-1$. As we will see in the following proposition, the complete cohering power is unbounded for almost all quantum channels.
\begin{proposition}\label{thm:ccpel}
For $\ell_1$-norm of coherence, complete cohering power of $\Phi$ whose generalized cohering power is not zero is unbounded. 
\end{proposition}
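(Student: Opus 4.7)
The plan is to exploit the (almost) multiplicativity of the $\ell_1$-norm of coherence under tensor products. A short direct computation in the product basis $\{|ij\rangle\}$ shows
\[
C_{\ell_1}(\rho\otimes\sigma)+1 \;=\; \bigl(C_{\ell_1}(\rho)+1\bigr)\bigl(C_{\ell_1}(\sigma)+1\bigr),
\]
since $\langle ij|\rho\otimes\sigma|kl\rangle=\rho_{ik}\sigma_{jl}$, so summing absolute values of all matrix entries gives $\|\rho\otimes\sigma\|_{\ell_1}=\|\rho\|_{\ell_1}\,\|\sigma\|_{\ell_1}$, and subtracting the diagonal contribution (which equals $1$ in each factor) yields the identity.

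Next, use the assumption that $\mathscr{C}_{\ell_1}(\Phi)>0$: pick a state $\rho$ on system $A$ with
\[
\delta \;:=\; C_{\ell_1}(\Phi[\rho])-C_{\ell_1}(\rho) \;>\; 0.
\]
As a test input for the complete cohering power on $A\otimes B$ with $B$ of dimension $k$, I would feed the product state $\rho\otimes|+_k\rangle\!\langle +_k|$, where $C_{\ell_1}(|+_k\rangle\!\langle +_k|)=k-1$. Because $\Phi\otimes\id_k$ acts trivially on $B$, the output is $\Phi[\rho]\otimes|+_k\rangle\!\langle +_k|$, and applying the product identity to both the input and output gives
\[
C_{\ell_1}\!\bigl(\Phi\otimes\id_k[\rho\otimes|+_k\rangle\!\langle +_k|]\bigr)-C_{\ell_1}(\rho\otimes|+_k\rangle\!\langle +_k|) \;=\; \delta\cdot k.
\]
Since this difference grows without bound as $k\to\infty$, we obtain $\mathfrak{C}_{\ell_1}(\Phi)=\infty$, proving the proposition.

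There is no real obstacle here; the only thing to check carefully is the tensor-product identity for $C_{\ell_1}$, which I would verify explicitly by the matrix-element computation above. The argument also highlights why the analogous result fails for the relative entropy of coherence: $C_r$ is additive under tensor products, so padding the input with an ancillary maximally coherent state contributes equally to input and output coherence and produces no gain — exactly the behavior encoded by Theorem~\ref{thm:ccprel}.
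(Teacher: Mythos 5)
Your proposal is correct and follows essentially the same route as the paper: both rely on the multiplicativity identity $C_{\ell_1}(\rho\otimes\sigma)+1=(C_{\ell_1}(\rho)+1)(C_{\ell_1}(\sigma)+1)$ applied to a product input $\rho\otimes\ket{+_k}\!\bra{+_k}$, yielding a coherence gain of $\bigl(C_{\ell_1}(\Phi[\rho])-C_{\ell_1}(\rho)\bigr)\cdot k$ that diverges as $k\to\infty$. Your explicit matrix-element verification of the identity and the closing remark contrasting this with the additivity of $C_r$ are welcome additions but do not change the argument.
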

	\begin{proof}
		Let us consider a product state $\rho^{AB} = \rho^A \otimes \rho^B$ as the input for $\Phi$. Using the equality~\cite{Bu-2017a} $C_{\ell_1}(\rho^A \otimes \sigma^B) = [ C_{\ell_1}(\rho^A) + 1 ][ C_{\ell_1}(\sigma^B) + 1]-1$ we obtain:
		\begin{align*}
		\mathfrak{C}_{\ell_{1}}(\Phi) \geq & \left\{ C_{\ell_{1}}\left(\Phi\otimes\id_k\left[\rho^{AB}\right]\right)-C_{\ell_{1}}\left(\rho^{AB}\right)\right\}  \\
= & \bigg\{\left[C_{\ell_{1}}\left(\Phi\left[\rho^{A}\right]\right)+1\right]\left[C_{\ell_{1}}\left(\rho^{B}\right)+1\right] \\
- & \left[C_{\ell_{1}}\left(\rho^{A}\right)+1\right]\left[C_{\ell_{1}}\left(\rho^{B}\right)+1\right]\bigg\} \\
= & \left[C_{\ell_{1}}\left(\Phi\left[\rho^{A}\right]\right)-C_{\ell_{1}}\left(\rho^{A}\right)\right]\left[C_{\ell_{1}}\left(\rho^{B}\right)+1\right] .
\end{align*}
If we assume $\rho^B$ to be the maximally coherent state and take the dimension of system $B$ larger, then the generated $\ell_1$-norm of coherence increases as long as $C_{\ell_1}(\Phi [\rho^A]) - C_{\ell_1}(\rho^A)$ is not zero for some $\rho^A$, in other words $\mathscr{C}_{\ell_1}(\Phi) > 0$. So the complete cohering power is unbounded and not equal to generalized cohering power.
\end{proof}

By the same reasoning as in the proof of Proposition~\ref{thm:ccpel}, we see that the complete decohering power of $\Phi$ is unbounded, whenever $\Phi$ has nonzero generalized decohering power.

The two measures of coherence, $\ell_1$-norm of coherence and relative entropy of coherence, behave very differently with respect to cohering and decohering powers of quantum channels. For the $\ell_1$-norm of coherence, both complete cohering and decohering powers are unbounded for many channels. Thus, the relative entropy of coherence seems better suited for estimating the ability of quantum channels to create or destroy coherence. A similar effect has been observed for the geometric measure of quantum discord \cite{PhysRevLett.105.190502}, which can increase indefinitely depending on the attached ancillary system \cite{Piani-2012}.

\section{Conclusions}
In this work we have investigated optimal methods for establishing and destroying quantum coherence via quantum channels. We found that correlations with ancillary systems do not enhance the ability of a quantum channel to create coherence. This result significantly simplifies the analysis of several quantities related to coherence generation with quantum channels, including the coherence generating capacity~\cite{Ben-Dana-2017a}. On the other hand, we found that entanglement with an ancillary system can improve the ability of a channel to destroy quantum coherence. These results open the possibility that every entangled state can show a coherence decay above generalized cohering power for some quantum channel. Proving this statement would establish another rigorous and operationally meaningful connection between the resource theories of entanglement and coherence. We leave the proof of this statement for future research. 

Most of our analysis concerns coherence quantifiers defined via the relative entropy, which have an operational meaning for the resource theory of coherence in the asymptotic limit~\cite{Winter-2016a}. We have also investigated another commonly used coherence quantifier based on the $\ell_1$-norm. We found that many channels show an unbounded coherence generation for this quantifier, if ancillary systems are taken into account. These results suggest that coherence measures based on the relative entropy are more suitable to describe the potential of quantum channels to establish and destroy quantum coherence.




\section*{Acknowledgements}
We thank Seok Hyung Lie for pointing out an error in an earlier version of this manuscript. S.R. and A.S. acknowledge financial support by the ''Quantum Optical Technologies'' project, carried out within the International Research Agendas programme of the Foundation for Polish Science co-financed by the European Union under the European Regional Development Fund.

\bibliography{CCP}

\end{document}